\title{Line-Broadcasting in Complete $k$-Trees}
\author{
R. Hollander Shabtai
\thanks{
School of Computer Sciences, Tel
Aviv University, Tel Aviv 69978, Israel and Afeka College  of Engineering, Tel-Aviv 69460, Israel }
\and Y. Roditty
\thanks{
School of Computer Sciences, Tel Aviv University, Tel Aviv 69978,
Israel and School of Computer Sciences, The Academic College of
Tel-Aviv-Yaffo, Tel-Aviv 61161, Israel. email: jr@post.tau.ac.il}}
\date{} 
\newcommand{\tab}[1]{\hspace{.1\textwidth}\rlap{#1}}
\newtheorem{lemma}{Lemma}[section]
\newtheorem{remark}{Remark}[section]
\begin{document}
\maketitle
\begin{abstract}

A {\textit line-broadcasting model} in a connected graph $G=(V,E)$, $|V|=n$, is a model in which
one vertex, called the {\it originator} of the broadcast holds a message that has to be transmitted
to all vertices of the graph through placement of a series of calls over the graph.
In this model, an informed vertex can transmit a message through a path of any length in a single time unit,
as long as two transmissions do not use the same edge at the same time.
Farley \cite{f} has shown that the process is completed within at most $\lceil \log_{2}n \rceil$ time units from any originator in a tree (and thus in any connected undirected graph) and that the cost of broadcasting one message from any vertex, i.e. the total number of edge used, is at most $(n-1) \lceil \log_{2}n \rceil$.

In this paper, we present lower and upper bounds for the cost to broadcast one message
in a complete $k-$tree, $k \ge 2$, from any vertex using the line-broadcasting model.
We prove that if $B(u)$ is the minimum cost to broadcast in a graph $G=(V,E)$ from a vertex $u \in V$ using the line-broadcasting model, then $(2-o(1))n \le B(u) \le (2+o(1))n$, where $u$ is any vertex in a complete $k$-tree. Furthermore, for certain conditions, $B(u) \le (2-o(1))n$.


\textbf{Keywords:} Broadcasting, Line-broadcasting.

\end{abstract}

\section{Introduction}
Broadcasting is the
process of message transmission in a communication network. The
communication network is modeled by a graph $G=(V,E)~,|V|=n$, where the set
of vertices $V$ represents the network members, and the set of edges
$E$ represents the communication links between given pairs of vertices.
We assume that $G$ is connected and undirected. We further assume that one vertex, called the\emph{ originator} of the graph, holds a message that has to be transmitted
to all vertices of the network through a placement of a series of calls over the
network. We call this a \emph{local broadcasting model}.

\emph{Line broadcasting}, sometimes called a \emph{wormwole} and \emph{cut-through} communication protocol, is a process in which a vertex can transmit the message
to any vertex in the graph through a path of any length in just one
time unit. The line-broadcasting model is applied in circuit-switched networks,
wormhole routing, optical networks, ATM switching and networks
supporting connected mode routing protocols.

In the line-broadcasting model, in a given time unit, two different calls cannot use the same edge; i.e., the paths used by two simultaneous calls must be edge-disjoint.
The \emph{cost} of a call is the number of edges used by the call, which is
the number of edges in the path between the call's transmitter and
receiver.

A {\it broadcasting scheme} specifies of which calls are
scheduled at each time unit (in any broadcasting model) and which paths are used in each call.

It is easily observed that the
lower bound on the number of time units needed to broadcast in a
graph $G$  is $\lceil{\log_{2} n}\rceil$. In the local broadcasting model, the possibility
to reach this lower bound depends on the graph topology. However, in the line-broadcasting model, Farley \cite{f} has proven that
the process of line-broadcasting can be completed within at most $\lceil{\log_{2} n}\rceil$ time units from any originator in a tree, and thus in any connected (undirected) graph.

Herein we measure the \textit{total time} and the \textit{cumulative cost}
of the broadcasting scheme. The total time of the broadcasting
scheme is equal to the number of time
units the broadcasting scheme needs in order to complete broadcasting.
The cumulative cost is the sum of the number of edges used by all calls
at each time unit cumulated on all time units.

 Broadcasting in communication networks
 has been investigated in the literature since the early 1950s
 (see surveys on broadcasting under various models
 and different topologies \cite{hhl},\cite{fl}).
 With the growing interest in parallel and telecommunication systems, a vast
literature has been devoted to specific group of communication
setups on specific network topologies.

The general broadcasting problem under the single-port local model
has been shown to be NP-complete; however, if the graph is a tree, the problem of
finding an optimal broadcasting scheme is polynomial
\cite{p,sch}.

Additional topologies that have been investigated include the complete graph,
torus graph, ring, grid, hypercube, shuffle-exchange and butterfly
graph, with a recent generalization of weighted trees in
\cite{ars1}.

 Analysis of broadcasting in grids was first investigated by
 Farley and Hedetniemi \cite{fh}. Van-Scoy and Brooks \cite{vs} extended their
 results to broadcasting of $m$ messages from
 a corner of a $2$- or $3$-dimensional grid.
 These results were extended to a $d$-dimensional grid
 by Roditty and Shoham \cite{rs}, who also developed
 an efficient algorithm for broadcasting from any
 originator in a $d$-dimensional grid.

Cohen, Fraigniaud, and Mitjana \cite{cfm} summarized current results and proposed new schemes for achieving minimum-time line
broadcasting in trees and in directed trees.
Averbuch, Hollander-Shabtai and Roditty \cite{ahr} introduced a line-broadcasting algorithm in a tree in the $k$-port model, where each vertex can transmit a message to at most $k$ vertices at each time unit. They also gave exact bounds for the cost of optimal $k$-port line-broadcast in stars and in complete trees.



Studies that investigate minimum-time single-port line-broadcasting schemes typically seek to identify means of minimizing the cumulative costs of
such schemes. The cost of Farley's scheme in \cite{f} is at most
$$(n-1)\lceil \log_2 n \rceil, \eqno (1)$$ where $n$ is the number of vertices in
the graph.

Kane and Peters \cite{kp} determined the value of the minimum cost
for a minimum-time line-broadcasting schemes in any cycle with $n$ vertices.
For $n=2^k$ they gave an exact value, while for other values of $n$ an
upper bound was presented. In each case the cost was about $\frac{1}{3}$ of Farley's upper bound, in (1).

Fujita and Farley \cite {ff} discussed minimum-time line
broadcasting in paths. The cost of their scheme was dependent on the
position of the originator in the path, and they specifically focused
on an originator that is a leaf of a path
or that is the father or grandfather of a leaf of a path. In any case, the
cost of the line-broadcasting scheme they introduced, was again
about $\frac{1}{3}$ of Farley's upper bound, in (1).

Averbuch, Roditty  and Shoham
\cite{ars2} developed efficient line-broadcasting algorithms in a
$d$-dimensional grid. These algorithms produce a linear cost as a function of
the number of vertices in the graph, with any vertex as the originator.

Averbuch, Gaber and Roditty \cite {agr1} studied line broadcasting in
complete binary trees. They provided a minimum-time line
broadcasting scheme originating from any vertex of the tree.
The lower and upper bounds they obtained, were, again,
$O(n)$ where, $n$ is the cardinality of the vertex set of the tree.


As far as we know there are no other known results concerning
the line-broadcasting model in which the cumulative cost is $O(n)$ and $n$ is the
size of the vertex set of the topology.

Thus, keeping in mind the result of Fujita and Farley \cite{ff}, concerning paths, on one hand, and that of Averbuch, Gaber and Roditty \cite{agr1} on the other hand, we pose the following problem:


\textbf{Problem:} Characterize the trees $T=(V,E)$, $|V|=n$, such that the cumulative cost using the line-broadcasting model from any vertex is linear in terms of n.

In this work, we provide some insight into the solution to this problem by extending the work of Averbuch, Gaber and Roditty \cite {agr1}. Specifically, we introduce lower and upper bounds for the cumulative cost
of a line-broadcasting process in a complete $k-$ tree , $k \ge 2$.
As we show , both bounds are linear in terms of $n$,
the cardinality of the vertex set of the  complete $k$-tree.

Let $B(u)$ denote the minimum cost to broadcast in a graph $G=(V,E)$ from a vertex $u \in V$ using the
line-broadcasting model.

Our main results are:

\textbf{Theorem 1:}
\label{t1}

Let $T=(V,E)$, $|V|=n$,
$n =\frac{k^{r+1}-1}{k-1}$, where, $k \ge 2 ~, r \ge 1 $,
are positive integers,
be a complete $k$-tree with height $r$.
Then,

$$ B(u) \ge (2-\frac{2(k-1)}{k^{2}})n-\frac {\lceil \log k \rceil}{k})-\frac{2}{k^{2}} + \frac{\lceil \log k \rceil}{k}-1.$$

For the upper bound we show:

\textbf{Theorem 2:}
\label{t2}

Let $T=(V,E)$, $|V|=n$,
$n =\frac{k^{r+1}-1}{k-1}$, where, $k \ge2 ~, r \ge 1 $,
are positive integers,
be a complete $k$-tree with height $r$.
Then, $B(u)$ is at most,

\begin{enumerate}

\item $(2 - \frac{\lceil \log_{2}(k+1) \rceil}{k})n-2+\frac{\lceil \log_{2}(k+1) \rceil}{k}$,  if  $r \lceil \log_{2}(k+1) \rceil \le \lceil \log_{2}n \rceil$.

\item $[2-\frac{(k-1)}{k^2}\lceil \log_{2}(k+1)\rceil + \frac{1}{k(k-1)}]n-2(r-1)+\frac{k}{(k-1)^2}+\frac{1}{k}-\frac{\lceil \log_{2}(k+1) \rceil}{k^2}$,  if $\lceil \log_{2}(n-k^{r}) \rceil + \lceil \log_{2}(k+1) \rceil \le  \lceil \log_{2}n \rceil$.

\item $(2 + { \frac{1}{k-1}})n + 2r \lceil \log_2 k^{r} \rceil - 2 \lceil \log_2 (k^{r}+1) \rceil  - 3r- { \frac{r+1}{k-1}}$,  otherwise.


\end{enumerate}




\begin{remark}
For $k=2$, the lower and upper bounds in \cite{agr1} are better than those derived from Theorems 1 and 2.
\end{remark}
The reminder of this paper is organized as follows: In section 2 we provide some definitions and notation that will be used in subsequent sections. In section 3 we prove Theorem 1, which provides a lower bound for the cost of line-broadcasting in complete $k$-trees. In section 4 we describe two procedures that perform partial line-broadcasting in complete $k$-trees. The two procedures will be used to develope our main algorithm, which is the main tool for the proof of Theorem 2. In section 5 we prove Theorem 2 by presenting our main algorithm. The algorithm, called LBCKT uses three algorithms for line-broadcasting in complete $k$-trees; these algorithms denoted Alg1, Alg2 and Alg3, are also presented in section 5. The algorithm LBCKT decides which of the three algorithms to use on the basis of the values of $k$ and $r$ in the expression $n=\frac{k^{r+1}-1}{k-1}$, with the goal of fulfilling the broadcast time constraint, $\lceil \log_{2}n \rceil$, and minimizing the total cost of the line-broadcasting process.

\section{Notation and Definitions}
Let $T=(V,E)$ be a rooted tree.

\begin{enumerate}
\item Denote by $root(T)$ the root of $T$.

\item A \textit{level} in a tree is a set of vertices that are at the same distance from $root(T)$. Let $T$ be a tree with $r+1$ levels.

Let $L_{j}$, $0 \le j \le r$, denote the set of vertices of level $j$ in a tree $T$.

\item A \textit{complete tree} is a tree, in which all leaves are at the same level.

\item The \textit{height} of a tree $T$, denoted by $r$, is the number of edges on a path from the $root(T)$ to the farthest leaf. A tree of height $r$ has $r+1$ levels labeled $0,...,r$, where $root(T)$ is at level $0$ and the farthest leaf is at level $r$.

\item Let $T=(V,E)$ be a rooted tree. For each $v \in V\setminus \{root(T)\}$, denote by $P(v)$, the parent of $v$, where $P(v) \in V$, $(P(v),v) \in E$ and $P(v)$ is on the path from $root(T)$ to $v$.

\item A \textit{$k-tree$}, $k \in N$, is a rooted tree, in which the number of children of each non-leaf vertex is exactly $k$. The degree of $root(T)$ is $k$, the degree of each non-leaf vertex excluding the $root(T)$ is $k+1$, and the degree of a leaf-vertex is $1$.

\item A \textit{complete $k$-tree}  is a rooted $k$-tree, in which each vertex has exactly $k$ children and all leaves are at the same level.

\item A \textit{complete tree} is a tree, in which all leaves are at the same level.

    \textbf{Observation:}
    \begin{enumerate}


    \item The number of vertices in a complete $k$-tree with height $r$ is $|V|=n=\frac{k^{r+1}-1}{k-1}$.

    \item The number of vertices in $L_{j}$, $0 \le j \le r$, in a complete $k$-tree with height $r$, is $k^{j}$.

    \end{enumerate}
\end{enumerate}

For other graph theory definitions refer to \cite{west}.

Throughout the paper, by $T=(V,E)$ we denote the complete $k$-tree.

\section{Proof of Theorem 1}

Observe first that
$$\lceil \log {n} \rceil \le r \log {k} +1 . \eqno(2)$$

Consider now the last $\lceil \log {k} \rceil$ time units of the line-broadcast. That is, from time unit $\lceil \log{n} \rceil - \lceil \log{k} \rceil+1$ to time unit $\lceil \log {n} \rceil$.

The number of internal vertices in the tree is $\frac{k^{r+1}-1}{k-1}-k^{r}-1=\frac{k^{r}-1}{k-1}-1=\frac{1}{k}\frac{k^{r+1}-k}{k-1}-1=\frac{1}{k}(n-1)-1 \le \frac{1}{k}(n-1)$.

Therefore, the number of calls done by internal vertices that transmit at the last $\lceil \log {k} \rceil$ time units is at most $\frac{n-1}{k} \lceil \log k \rceil$.

The number of vertices that are informed before the last $\lceil \log {k} \rceil $ time units is at most $ 2^{\lceil \log n \rceil -\lceil \log k \rceil}$, and therefore, using (2), the number of uninformed vertices is at least 
$$n-2^{\lceil \log{n} \rceil -\lceil \log k \rceil} \ge n-2\cdot 2^{(r-1) \log k } = n-2\frac{n(k-1)+1}{k^{2}}.$$ 


Each informed leaf, $v$, that transmits the message at the last $\lceil \log k \rceil$ time units has received the message via a call that used the edge $(v,P(v))$ and therefore, each call that $v$ initiates reuses the edge $(v,P(v))$.

The leaves transmit at the last $\lceil \log k \rceil$ time units to at least 

$n-2\frac{n(k-1)+1}{k^{2}} -\frac{n-1}{k}\lceil \log k \rceil$ vertices and therefore, with the additional obvious $n-1$ calls it follows that,

$$B(u) \ge n(2-\frac{2(k-1)}{k^{2}}-\frac {\lceil \log k \rceil}{k})-\frac{2}{k^{2}} + \frac{\lceil \log k \rceil}{k}-1.$$

Note that is case $u$ is a leaf the cost is decreased by 1.

\begin{remark}
The case where $k=2^{a}$ was dealt in \cite{gp}.
\end{remark}

\section{Procedures for partial line-broadcasting in a complete $k$-tree}
In this section we introduce two procedures that perform broadcasting to a part of a complete $k$-tree. The first procedure, $ToLevel$, broadcasts to all vertices in some level $L_{j}$, for a given $j$, $1 \le j \le r$. At the beginning of the procedure only the originator is informed. At the end of the procedure the originator and all vertices in $L_{j}$, and only they, are informed.

The second procedure, $FromLevel$, broadcasts from all vertices in some level $L_{j}$, for a given $j$, $1 \le j \le r$, to $L_{0},...,L_{j-1}$. At the beginning of the procedure only the originator and the vertices in $L_{j}$ are informed. At the end of the procedure all vertices in $L_{0},...,L_{j}$ are informed.
Both procedures shall be used to prove Theorem 2 in section 5.

The input for the following two procedures is:

 $T$ - a complete tree $k$-tree

 $k$ - the $k$-tree parameter (the number of children of each non leaf vertex)

 $r$ - the tree height

 $j$ - index of level $j$, $1 \le j \le r$ in the tree

 $u$ - the broadcast originator


\subsection{Line-broadcasting procedure to level $L_{j}$}

Label the vertices in $L_{j}$ from left to right:  $v_{1}^{j},...,v_{k^{j}}^{j}$.



\textbf{Def 1.} Define $j$ subsets of $L_{j}$, $S_{1},...,S_{j}$, such that for each $1 \le\ m \le j$, $$S_{m} = \{ v_{a_{m,i}}^{j}| a_{m,i} = 1+i k^{j-m}, 0 \le i \le k^{m}-1\}.$$

    As an immediate consequence of the definition we obtain,

    \begin{lemma}
    \label{l1}
    $S_{1} \subseteq S_{2} \subseteq ...  \subseteq S_{j}$.
    \end{lemma}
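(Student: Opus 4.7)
The plan is to establish the chain of inclusions by proving the single inclusion $S_m \subseteq S_{m+1}$ for every $1 \le m \le j-1$, and then chaining them together. This is essentially an exercise in rewriting indices, so the proof should be short.

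First I would fix $m$ with $1 \le m \le j-1$ and pick an arbitrary element of $S_m$, which by definition has the form $v_{a}^{j}$ with $a = 1 + i\,k^{j-m}$ for some $0 \le i \le k^{m}-1$. The natural move is to rewrite the exponent $k^{j-m}$ as $k\cdot k^{j-m-1}$, so that $a = 1 + (ik)\,k^{j-(m+1)}$. Setting $i' := ik$, this exhibits $v_{a}^{j}$ in the form required for membership in $S_{m+1}$, provided $0 \le i' \le k^{m+1}-1$.

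The only thing to verify is the index bound. Since $0 \le i \le k^{m}-1$, I get $0 \le i' = ik \le (k^{m}-1)k = k^{m+1}-k \le k^{m+1}-1$ (using $k\ge 2$, in fact $k\ge 1$ suffices). Hence $v_{a}^{j} \in S_{m+1}$, establishing $S_m \subseteq S_{m+1}$.

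The lemma then follows by composing these inclusions for $m=1,2,\dots,j-1$. I don't anticipate any real obstacle: the statement is purely combinatorial about arithmetic progressions of step $k^{j-m}$ inside a fixed index range, and the key observation is simply that the step of $S_m$ is a multiple of the step of $S_{m+1}$ by a factor of $k$, so every term of the coarser progression appears in the finer one.
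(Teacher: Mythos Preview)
Your argument is correct and is exactly the natural way to spell out what the paper leaves implicit: the paper gives no proof at all, merely stating that the lemma is ``an immediate consequence of the definition,'' and your index-rewriting $1 + i\,k^{j-m} = 1 + (ik)\,k^{j-(m+1)}$ together with the bound check $0 \le ik \le k^{m+1}-k \le k^{m+1}-1$ is precisely the verification that makes this immediate.
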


    \begin{lemma}
    \label{l2}
    $S_{j}=L_{j}$.
    \end{lemma}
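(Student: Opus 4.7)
The plan is a direct substitution verification. I would simply plug $m = j$ into the definition of $S_m$ and show that the resulting index set enumerates all of $L_j$.

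Concretely, setting $m = j$ in \textbf{Def 1} gives
\[
a_{j,i} \;=\; 1 + i\,k^{j-j} \;=\; 1 + i, \qquad 0 \le i \le k^{j}-1.
\]
So as $i$ ranges over $0, 1, \ldots, k^{j}-1$, the index $a_{j,i}$ ranges over $1, 2, \ldots, k^{j}$. Hence
\[
S_{j} \;=\; \{\, v^{j}_{1}, v^{j}_{2}, \ldots, v^{j}_{k^{j}} \,\}.
\]
Since the observation at the end of Section~2 records that $|L_{j}| = k^{j}$, and the labelling convention just above \textbf{Def 1} lists the vertices of $L_{j}$ from left to right as $v_{1}^{j}, \ldots, v_{k^{j}}^{j}$, this set is exactly $L_{j}$, which is the claim.

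There is essentially no obstacle here: the lemma is a one-line unfolding of the definition, needed only so that later uses of Lemma~\ref{l1} (nesting of the $S_{m}$'s) terminate in the full level $L_{j}$. The only thing to be careful about is the index range $0 \le i \le k^{m}-1$, which at $m=j$ contains exactly $k^{j}$ consecutive integers starting at $1$, matching the enumeration of $L_{j}$.
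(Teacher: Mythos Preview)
Your proof is correct and is exactly the direct unfolding of \textbf{Def 1} that the paper has in mind; the paper itself offers no explicit argument beyond declaring the lemma an ``immediate consequence of the definition.'' Your write-up simply spells out that immediate verification.
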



\textbf{Def 2.}  Define $j$ sets of vertices $S_{1}',...,S_{j}'$, where $S'_{1}=S_{1}$ and for each $2 \le\ m \le j$, $S'_{m}=S_{m}-S_{m-1}$.

    \begin{lemma}
    \label{l3}
    $S'_{1} \cup S'_{2} \cup ... \cup S'_{j} = S_{j} = L_{j}$.
    \end{lemma}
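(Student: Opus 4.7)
The plan is to prove $\bigcup_{m=1}^{j} S'_m = S_j$ by a straightforward telescoping argument, and then to invoke Lemma \ref{l2} for the second equality. There is no real obstacle here, as the result follows directly from Definition 2 together with the chain inclusion established in Lemma \ref{l1}.

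First I would record the base of the telescoping: by Definition 2, $S'_1 = S_1$, so for $j = 1$ the claim reduces to $S_1 = L_1$, which is exactly Lemma \ref{l2} applied with $j=1$.

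For $j \ge 2$, I would argue by induction on $m$ that $\bigcup_{i=1}^{m} S'_i = S_m$ for every $1 \le m \le j$. The inductive step uses the identity $S'_{m} = S_{m} \setminus S_{m-1}$ from Definition 2 together with the inclusion $S_{m-1} \subseteq S_{m}$ from Lemma \ref{l1}, giving
\[
\bigcup_{i=1}^{m} S'_i \;=\; \Bigl(\bigcup_{i=1}^{m-1} S'_i\Bigr) \cup S'_m \;=\; S_{m-1} \cup (S_m \setminus S_{m-1}) \;=\; S_m.
\]
Taking $m = j$ yields $S'_1 \cup S'_2 \cup \cdots \cup S'_j = S_j$. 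Finally, $S_j = L_j$ is precisely Lemma \ref{l2}, which closes the chain of equalities and completes the proof.

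Since every step is an immediate application of the definitions or of the two preceding lemmas, the argument is essentially set-theoretic bookkeeping; the only thing to be careful about is citing Lemma \ref{l1} explicitly when passing from $S_{m-1} \cup (S_m \setminus S_{m-1})$ to $S_m$, as this equality relies on $S_{m-1} \subseteq S_m$.
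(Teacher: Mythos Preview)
Your proof is correct and follows essentially the same approach as the paper: the paper simply states that the lemma ``follows immediately from Def 2 and from Lemmas \ref{l1} and \ref{l2},'' and your telescoping/induction argument is precisely the natural way to unpack that remark. You have supplied the details the paper omits, using exactly the same ingredients.
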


    \begin{lemma}
    \label{l4}
    For each $p,q$, $1 \le p,q \le j$, $p \neq q$,
    $S'_{p} \cap S'_{q} = \phi$.
    \end{lemma}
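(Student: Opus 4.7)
The plan is to reduce this to the telescoping structure of the chain $S_1 \subseteq S_2 \subseteq \cdots \subseteq S_j$ established in Lemma~\ref{l1}. Since the statement is symmetric in $p$ and $q$, I would assume without loss of generality that $p<q$, so that in particular $q\geq 2$ and $p\leq q-1$.

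The key observation is that every $S'_m$ is contained in $S_m$: for $m=1$ this is the definition $S'_1=S_1$, and for $m\geq 2$ it follows from $S'_m=S_m\setminus S_{m-1}\subseteq S_m$. Applying this with $m=p$ and combining with Lemma~\ref{l1} gives the inclusion
\[
S'_p\subseteq S_p\subseteq S_{q-1},
\]
where the second inclusion uses $p\leq q-1$.

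On the other hand, the definition of $S'_q$ (valid because $q\geq 2$) is $S'_q=S_q\setminus S_{q-1}$, so by construction $S'_q\cap S_{q-1}=\emptyset$. Intersecting the two and using the inclusion above yields
\[
S'_p\cap S'_q\subseteq S_{q-1}\cap S'_q=\emptyset,
\]
which is the claim. There is no real obstacle here: the lemma is a purely set-theoretic consequence of the nesting $S_1\subseteq\cdots\subseteq S_j$ and the telescoping definition of the $S'_m$, and it does not require any properties specific to the labeling of $L_j$ or to the complete $k$-tree structure.
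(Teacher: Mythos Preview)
Your proof is correct and is precisely the argument the paper has in mind: the paper simply states that the lemma ``follows immediately from Def~2 and from Lemmas~\ref{l1} and~\ref{l2},'' and your write-up makes that telescoping/set-theoretic reasoning explicit. (Note that Lemma~\ref{l2} is not actually needed here---it is used for Lemma~\ref{l3}---so your omitting it is fine.)
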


    The proof of both lemmas follows immediately from Def 2 and from lemmas \ref{l1} and \ref{l2}.

    \textbf{Def 3.} For each vertex $v \in L_{j}$, $1 \le j \le r$, define $P$, the set of the ancestors of $v$, as the set of vertices that are on the path from $v$ to $root(T)$. $P=\{P(0),...,P(j-1)\}$, where for each $m$, $1 \le m \le j$, $P(m-1) \in L_{m-1}$. Note that $P(j-1)=P(v)$. Thus, the distance between $v$ and $P(m-1)$ is $j-m$. For each $P(i)$, $0 \le i \le j-1$, denote the subtree of $T$ rooted at $P(i)$ by $T_{P(i)}$.

We now present the procedure $ToLevel$. The procedure consist of $j$ rounds, where in each round $m$, $1 \le m \le j$, the originator and the informed vertices in $L_{j}$, which are the members of $S_{m-1}$, broadcast to the vertices in $S'_{m}$. Specifically, an informed vertex $v_{s}^{j} \in S_{m-1}$ broadcasts to a vertex $v_{t}^{j} \in S'_{m}$ that is in the sub tree of $P(m-1)$, where $s<t$. After round $m$ there are $|S_{m-1}|+|S'_{m}|$= $|S_{m}|$ informed vertices in $L_{j}$.

Note that the $m$ rounds of the procedure are not disjoint in time units. A given round $i$, $1 \le i \le m-1$, may end at time unit $t$, and round $m+1$ may begin at the same time unit.

\begin{algorithm}[h!!]
\caption{$ToLevel (T,k,r,j,u)$}
\label{alg:EDP2}
\begin{algorithmic}[1]
	\STATE 	for $m=1$ to $j$ do
	\STATE 	\tab{$u$ transmits to an uninformed vertex in $S'_{m}$}
    \STATE \tab{each informed vertex $v^{j}_{p}$ in $L_{j}$ transmits to an uninformed vertex $v^{j}_{q}$ }
    \STATE {\tab{in $S'_{m}$, such that $p < q < p+k^{j-m}$}}
	\end{algorithmic}
\end{algorithm}




\subsubsection{Correctness of the procedure $ToLevel$}
\begin{lemma}

\label{l5}
At the end of procedure $ToLevel$, all vertices in $L_{j}$ are informed.
\end{lemma}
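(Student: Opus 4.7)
The plan is to prove a stronger invariant by induction on the round index $m$: after round $m$ of procedure $ToLevel$ completes, the set of informed vertices in $L_j$ is exactly $S_m$. Taking $m = j$ and invoking Lemma \ref{l2}, which states $S_j = L_j$, then gives the desired conclusion.

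For the base case $m=1$, only $u$ is informed initially, and $S'_1 = S_1$ consists of the $k$ vertices whose indices are $1 + \ell k^{j-1}$ for $0 \le \ell \le k-1$. Repeated doubling from $u$ (together with each newly informed vertex in $L_j$) within round 1 reaches all of these, so the informed set in $L_j$ becomes $S_1$. For the inductive step, assume the invariant after round $m-1$. Unpacking Def 1 and Def 2, one can write every element of $S'_m$ uniquely as $v^j_{1 + bk^{j-m+1} + \ell k^{j-m}}$ with $0 \le b \le k^{m-1}-1$ and $1 \le \ell \le k-1$. For each fixed $b$, these $k-1$ vertices share a common ancestor $P(m-1) \in L_{m-1}$ with the sender $v^j_{1+bk^{j-m+1}} \in S_{m-1}$, so the algorithm's index condition pairs that block with its natural sender while the originator $u$ handles the first block. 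Because the subtrees $T_{P(m-1)}$ of distinct ancestors at level $m-1$ are edge-disjoint, the concurrent transmissions inside the round do not conflict, and standard doubling inside each such subtree informs all $k-1$ receivers of that block before the round ends. Hence round $m$ adds exactly $S'_m$ to the informed set, and by Def 2 the informed set in $L_j$ is $S_{m-1} \cup S'_m = S_m$.

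The main obstacle is the combinatorial bookkeeping in the inductive step: verifying that the index condition in the algorithm really matches every receiver $v^j_q \in S'_m$ to a unique sender drawn from $S_{m-1} \cup \{u\}$, so that the pairing both covers $S'_m$ and respects the single-port edge-disjoint constraint through the ancestor subtrees $T_{P(m-1)}$. Once this pairing is established, the lemma follows by induction together with Lemma \ref{l2}.
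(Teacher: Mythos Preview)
Your proof is correct and follows essentially the same approach as the paper: establish the invariant that after round $m$ the set of informed vertices in $L_j$ is exactly $S_m$, and then conclude by invoking $S_j = L_j$ (Lemma~\ref{l2}). You supply considerably more detail than the paper's very terse argument---in particular the explicit decomposition of indices in $S'_m$ and the block-by-block doubling picture---and you also fold in part of the edge-disjointness reasoning that the paper postpones to Lemma~\ref{l6}, but none of this changes the underlying strategy.
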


\begin{proof}
\begin{enumerate}
\item The number of vertices that are informed at round $m$, $1 \le m \le j$ , is $|S'_{m}| = |S_{m}|-|S_{m-1}|$.
\item The number of informed vertices in $L_{j}$ after round $m$, $1 \le m \le j$, is $k^{m}$, and the informed vertices are the vertices in $S_{m}$.
\end{enumerate}

Since $S_{j}=L_{j}$, at the end of the execution of the procedure all vertices in $L_{j}$ are informed.
\end{proof}

\begin{lemma}
\label{l6}
The procedure $ToLevel$ fulfills the edge-disjoint constraint.
\end{lemma}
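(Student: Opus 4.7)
My plan is to verify the edge-disjoint constraint round by round, using a spatial decomposition of the calls according to the level-$(m-1)$ subtrees of $T$. The guiding observation is that every $L_j$-to-$L_j$ call of round $m$ is confined to a single such subtree, so conflicts need only be checked locally.

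First I will show that in round $m$ every call between two $L_j$-vertices has both endpoints inside a common level-$(m-1)$ subtree, whence its entire path lies inside that subtree. By the definitions of $S_{m-1}$ and $S'_m$, the positions of $S_{m-1}$ in $L_j$ are exactly the indices of the form $1 + i k^{j-m+1}$, one per level-$(m-1)$ subtree (the leftmost $L_j$-slot of that subtree), while the indices of $S'_m$ lie at $1 + i k^{j-m+1} + \ell k^{j-m}$ with $1 \le \ell \le k-1$, filling the remaining $k-1$ $S_m$-slots inside the same subtree. The bounding inequality on $q$ in line 3 of the algorithm forces $q$ to land in the same block of $k^{j-m+1}$ consecutive indices as $p$, hence in the same level-$(m-1)$ subtree. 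Since distinct level-$(m-1)$ subtrees are edge-disjoint, any two simultaneous calls whose endpoints lie in different such subtrees are automatically edge-disjoint.

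Next I will address edge-disjointness inside a single level-$(m-1)$ subtree $T'$ in round $m$. The $k$ vertices of $S_m \cap L_j \cap T'$ lie one apiece in the $k$ level-$m$ child sub-subtrees of the root of $T'$, so every call of round $m$ with endpoints in $T'$ uses exactly two of the $k$ stem edges incident to the root of $T'$, together with edges inside the two involved child sub-subtrees. I will proceed by induction on the time units of round $m$, showing that the doubling schedule inside $T'$ assigns the simultaneous calls to pairwise disjoint pairs of child sub-subtrees, so no stem edge and no edge of a child sub-subtree is used twice. The call from $u$ in line 2 is handled separately: the schedule is chosen so that $u$'s target lies in a child sub-subtree not claimed by any concurrent $L_j$-to-$L_j$ call, making $u$'s path edge-disjoint from the others in the same time unit. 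The stated overlap of consecutive rounds in time units is only a scheduling overlap --- within any one time unit the active calls belong to one round's broadcast wave --- so cross-round conflicts do not arise.

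The main obstacle is the induction described above, where I must track precisely which child sub-subtrees of $T'$ host the endpoints of each simultaneous call. I expect the cleanest treatment to carry, as an inductive invariant, the list of child sub-subtrees that contain an informed $S_m$-vertex after each time unit; this then lets one match every new call to a distinct pair ``informed child / uninformed child'' and certify edge-disjointness inside $T'$ against every other simultaneously scheduled call.
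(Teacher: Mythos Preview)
Your level-$(m-1)$ subtree decomposition is essentially the paper's own approach: the paper argues that two simultaneous round-$m$ calls have their endpoints in distinct subtrees rooted at $L_{m-1}$ and are therefore edge-disjoint. Where you go further is your step treating several simultaneous round-$m$ calls inside a \emph{single} level-$(m-1)$ subtree $T'$; the paper's proof does not explicitly cover this case (it effectively assumes at most one round-$m$ call per such subtree at any time unit), so your refinement is not redundant. That said, the induction you propose is heavier than needed: since each of the $k$ level-$m$ child sub-subtrees of $T'$ contains exactly one vertex of $S_m$, any two distinct simultaneous round-$m$ calls inside $T'$ automatically have all four endpoints in four distinct child sub-subtrees, and edge-disjointness follows immediately without tracking which sub-subtrees are informed over time.

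There is, however, a genuine gap in your treatment of round overlap. You assert that ``within any one time unit the active calls belong to one round's broadcast wave,'' but the paper explicitly says the opposite: a round $m$ may end and round $m+1$ begin at the \emph{same} time unit, so simultaneous calls from different rounds do occur. Your dismissal of cross-round conflicts therefore rests on a false premise. You must instead argue directly that a round-$m$ call (confined to a level-$(m-1)$ subtree and turning at its root) and a simultaneous round-$(m+1)$ call (confined to a level-$m$ subtree and turning at \emph{its} root) are edge-disjoint. This is true --- the round-$(m+1)$ call sits entirely inside one level-$m$ sub-subtree $A$, while the portion of the round-$m$ call inside $A$ runs only along the branch through the unique $S_m$-vertex of $A$, which is neither the sender nor the receiver of the round-$(m+1)$ call --- but it has to be said rather than waved away.
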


\begin{proof}
 Any two calls that are executed at round $m$, $1 \le m \le j$, are either calls from the originator to a vertex in $L_{j}$ or between two vertices in $L_{j}$.

Consider two calls that are executed at round $m$, $1 \le m \le j$, from $v^{j}_{p}$ to $v^{j}_{q}$ and from $v^{j}_{s}$ to $v^{j}_{t}$, where, $p < s$, $q < t$ (it is obvious that $p \neq q$ and $s \neq t$, since $p,s$ are informed and $q,t$ are not informed before the calls are executed). Then, from line 3 in $ToLevel$, $p < q < p+k^{j-m}$ and $s < t < s+k^{j-m}$. From the definition of $S'_{m}$, there exists a positive integer $i$ such that $p = q+ i k^{j-m}$, and therefore $p,q$ and $s,t$ are not the same sub-tree rooted at $L_{m-1}$, and therefore the paths that are used by the two calls are edge-disjoint.

If $u \not \in L_{j}$, consider two calls that are executed at round $m$, $1 \le m \le j$, from $u$ to $v^{j}_{q}$ and from $v^{j}_{s}$ to $v^{j}_{t}$, where, $q < t$ (it is obvious that $u \neq v^{j}_{q}$ and $s \neq t$). Again, the two calls pass through an ancestor of the two vertices in $L_{m-1}$, and according to the definition of $S'_{m}$, the vertices that receive the message are not in the same sub tree of their ancestor in $L_{m-1}$. Therefore, the paths that are used by the two calls are edge disjoint.

\end{proof}

\subsubsection{The total time of the procedure $ToLevel$}
If $u \in L_{j}$, then the number of informed vertices at the end of the procedure is $|L_{j}| = k^{j}$. Since each vertex that is informed at time unit $i$ is active at all time units, with the possible exception of the last time unit, the number of informed vertices is doubled at each time unit. Therefore, the number of time units needed to complete the broadcast to $L_{j}$ is $\lceil \log_{2}k^{j} \rceil$. If $u \not\in L_{j}$, the number of informed vertices at the end of the broadcast is $|L_{j}|+1=k^{j}+1$, and for the same reason, the number of time units needed to complete broadcasting in $L_{j}$ is $\lceil \log_{2}(k^{j}+1) \rceil$.

\subsubsection{The cost of the procedure $ToLevel$}
If $u \in L_{j}$, then each call initiated by $u$ passes through a path whose length is at most $2j$, from $u$ through some ancestor of $u$ to the receiver. Therefore, the cost of each call from $u$ to some vertex in $L_{j}$ is at most $2j$. The number of time units needed to complete broadcasting to all the $k^{j}$ vertices is $\lceil \log_{2}k^{j} \rceil$, and since $u$ is active at all time units, it is also the number of calls  initiated by $u$. Thus, the cost of all calls initiated by $u$ is at most $2j\lceil \log_{2}k^{j} \rceil$. If $u \not\in L_{j}$ the number of time units needed to broadcast to all of the $k^{j}$ vertices of $L_{j}$ is $\lceil \log_{2}(k^{j}+1) \rceil$. Again, since $u$ is active at all time units, in this case, the cost of each call from $u$ to some vertex in $L_{j}$, is at most $j$. Thus, the cost of the calls initiated by $u$ is at most $j\lceil \log_{2}(k^{j}+1) \rceil$.

 Observe that the number of informed vertices after round $m$, $1 \le m \le j$, is $|S_{m}| = k^{m}+1$, while the number of vertices that receive the information in round $m$ is $|S_{m}|-|S_{m-1}| = |S'_{m}|= k^{m}+1-(k^{m-1}+1) = k^{m}-k^{m-1}$. Therefore, the number of time units needed to complete round $m$, $1 \le m \le j$, is $\lceil \log_{2}(k^{m}+1)\rceil -\lceil \log_{2}(k^{m-1}+1)\rceil$. Thus, the number of calls initiated by $u$ at round $m$ is $\lceil \log_{2}(k^{m}+1)\rceil -\lceil \log_{2}(k^{m-1}+1)\rceil$, and the number of calls initiated by the informed vertices in $L_{j}$ at round $m$ is
  $$k^{m}-k^{m-1}-(\lceil \log_{2}(k^{m}+1)\rceil -\lceil \log_{2}(k^{m-1}+1)\rceil).$$

Since each informed vertex $v$ in $L_{j}$ broadcasts at round $m$, $1 \le m \le j$, to a vertex $w$ in the sub-tree of $P(m)$, the cost of each call initiated by a vertex in $L_{j}$ is $2(j-m+1)$, where $j-m+1$ edges are used to transmit to $P(m)$, and $j-m+1$ edges are used to transmit from $P(m)$ to $w$.

Thus, the cost of all calls initiated by the informed vertices in $L_{j}$ at round $m$ is $2(j-m+1)[k^{m}-k^{m-1}-(\lceil \log_{2}(k^{m}+1)\rceil -\lceil \log_{2}(k^{m-1}+1)\rceil)]$.

Therefore, the cost of all $m$ rounds is at most 

$$2j\lceil \log_{2}k^{j} \rceil +2\sum_{m=1}^{j}((j-m+1)[k^{m}-k^{m-1}-(\lceil \log_{2}(k^{m}+1)\rceil -\lceil \log_{2}(k^{m-1}+1)\rceil)]).\eqno(3)$$

In order to simplify the expression in (3) we made the following calculations:
\begin{enumerate}
\item $\sum_{m=1}^{j}(j+1)(\lceil \log_{2}(k^{m}+1)\rceil -\lceil \log_{2}(k^{m-1}+1)\rceil)=$

 $=(j+1)(\lceil \log_{2}(k^{j}+1)\rceil-1)$.
\item
$\sum_{m=1}^{j}m(k^{m}-k^{m-1})=\frac{k^{j+1}(kj-j-1)+k}{(k-1)^{2}}-\frac{k^{j}(kj-j-1)+1}{(k-1)^{2}}=$

$=\frac{k^{j}(kj-j-1)}{k-1}+\frac{1}{k-1}$.
\item $\sum_{m=1}^{j}(j+1)(k^{m}-k^{m-1})= (j+1)(k^{j}-1)$.

\item $\sum_{m=1}^{j} {m( \lceil \log_{2}(k^{m}+1)\rceil -\lceil \log_{2}(k^{m-1}+1)\rceil) } \le j  \lceil \log_{2}(k^{j}+1)\rceil-j$
\end{enumerate}

Thus, the cost of $ToLevel$ is at most




$$ \frac{2k(k^{j}-1)}{k-1} + 2j \lceil \log_{2} k^{j} \rceil - 2 \lceil \log_{2} (k^{j}+1) \rceil -2j+2 $$
 $$ \eqno(4)$$



\subsection{A line-broadcasting procedure from level $L_{j}$ to all levels $L_{0},...,L_{j-1}$}
In this section we introduce the procedure $FromLevel$, which describes line-broadcasting from the informed vertices in $L_{j}$ to all vertices in $L_{0},...,L_{j-1}$.

At the beginning of the procedure, the originator and the vertices in $L_{j}$ are informed. At the end of the procedure, all vertices in $L_{0},...,L_{j}$ are informed.

\textbf{Def 4.} Let $k$ be a positive integer. Define a sequence of positive integers $\{a_{n}\}_{n=1}^{\infty}$, where
$a_{n}=1+\frac{k^{n-1}-1}{k-1}$. For each $1 \le i \le j-1$, $1 \le t \le k^{j}$, we define a sequence $b_{i,t} = v_{a_{i}+(t-1)k^{i}}$, where $i$ is the sequence index and $t$ is the element index in sequence $i$ for $1 \le i \le k^{j-1}$.


\begin{algorithm}[h!!]
\caption{$FromLevel (T,k,r,j,u)$}
\label{alg:EDP2}
\begin{algorithmic}[1]
	\STATE 	{for $1 \le i \le j-1$, $1 \le t \le k^{j}$, $v_{b_{i,t}}$ transmits to its ancestor in $L_{j-i}$}
	\end{algorithmic}
\end{algorithm}		


Since all calls take place at the same time unit, the procedure takes exactly 1 unit of time.

\subsubsection{Correctness of the procedure $FromLevel$}
\begin{lemma}
\label{l7}
At end of the procedure $FromLevel$,  all vertices in $L_{0},...,L_{j}$ are informed.
\end{lemma}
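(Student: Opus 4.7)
The plan is to show that the calls scheduled in line 1 of $FromLevel$ collectively deliver the message to every vertex of $L_{1} \cup L_{2} \cup \cdots \cup L_{j-1}$. Combined with $L_{j}$, which is informed at the start by assumption, and the originator $u$, which is in $L_{0}$ (the root, in the intended context of use of $FromLevel$), this yields informedness of all of $L_{0}, \ldots, L_{j}$.

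First I would fix $i$ with $1 \le i \le j-1$ and consider a target vertex $v^{j-i}_{t} \in L_{j-i}$ for $t$ in the admissible range $1 \le t \le k^{j-i}$ --- the only range for which $b_{i,t} = a_{i} + (t-1)k^{i}$ is a legitimate index in $L_{j}$. I would then check that the call $v^{j}_{b_{i,t}} \to (\text{ancestor in }L_{j-i})$ of line 1 actually terminates at $v^{j-i}_{t}$. Under the left-to-right labeling the ancestor in $L_{j-i}$ of $v^{j}_{s}$ is $v^{j-i}_{\lceil s/k^{i} \rceil}$, so the identity reduces to $\lceil (a_{i} + (t-1)k^{i})/k^{i} \rceil = t$, equivalently $\lceil a_{i}/k^{i} \rceil = 1$. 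Using $a_{i} = 1 + \frac{k^{i-1}-1}{k-1}$, the bound $1 \le a_{i} \le 1 + k^{i-1} \le k^{i}$ (valid for $k \ge 2$ and $i \ge 1$) delivers $\lceil a_{i}/k^{i} \rceil = 1$, closing the step.

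Once this identification is in hand, letting $t$ vary over $\{1,\ldots,k^{j-i}\}$ informs each of the $k^{j-i} = |L_{j-i}|$ distinct vertices of $L_{j-i}$, and letting $i$ run from $1$ to $j-1$ covers $L_{j-1}, L_{j-2}, \ldots, L_{1}$. The main obstacle is pure bookkeeping: reconciling the stated range $1 \le t \le k^{j}$ with the admissible range $1 \le t \le k^{j-i}$ (indices beyond the admissible range do not name a vertex of $L_{j}$ and should be discarded), and performing the ceiling computation cleanly. A sanity check $b_{i, k^{j-i}} = a_{i} + k^{j} - k^{i} \le k^{j}$ confirms that the admissible range is tight.
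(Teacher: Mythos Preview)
Your argument follows the same strategy as the paper's---for each $i$, verify that the calls emanating from $\{v^{j}_{b_{i,t}}:t\}$ cover all of $L_{j-i}$---but you carry it out more explicitly: you identify the exact receiver as $v^{j-i}_{t}$ via the ancestor formula $v^{j-i}_{\lceil s/k^{i}\rceil}$ and the bound $1\le a_{i}\le k^{i}$, whereas the paper merely asserts that distinct transmitters have distinct ancestors and leaves the counting implicit. Your remark about trimming the $t$-range to $1\le t\le k^{j-i}$ is well taken and is needed to make either argument work.

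One caveat concerns $L_{0}$. Your claim that ``$u$ is in $L_{0}$ in the intended context of use'' is not supported by the paper: in both Alg2 and Alg3 the originator $u$ is arbitrary. The procedure as written (looping over $1\le i\le j-1$) never transmits to $L_{0}$, so neither the paper's proof nor yours actually covers the root in general; the paper's own cost formula $\sum_{i=1}^{j} i k^{j-i}$ in (5), which includes a term $i=j$ of cost $j$, indicates that the intended loop range is $1\le i\le j$. Your ceiling computation extends verbatim to $i=j$ (since $a_{j}=1+\tfrac{k^{j-1}-1}{k-1}\le k^{j}$), so with that correction the lemma follows without any assumption on the location of $u$.
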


\begin{proof}
 According to Def 4, if $p \neq q$, then $a_{p} \neq a_{q}$, and for each $1 \le i \le j-1$, the ancestors of $b_{p,t}$  and $b_{q,t}$ in $L_{j-i}$ are distinct. Therefore, all vertices in $L_{j-i}$ are informed, and thus all vertices in $L_{0},...,L_{j-1}$ are informed by $FromLevel$.
\end{proof}

\begin{lemma}
\label{l8}
The procedure $FromLevel$ fulfills the edge-disjoint constraint.
\end{lemma}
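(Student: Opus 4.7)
The plan is to fix an arbitrary edge $e=(v,P(v))$ of $T$ with $v\in L_m$, $1\le m\le j$, and show that at most one call of $FromLevel$ uses $e$; edge-disjointness of the entire scheme then follows. Since every call of the procedure travels strictly upward, the call from $v_{b_{i,t}}$ to its ancestor in $L_{j-i}$ uses $e$ exactly when $v_{b_{i,t}}$ lies in the subtree rooted at $v$ (equivalently $b_{i,t}\in W_s:=[(s-1)k^{j-m}+1,\,sk^{j-m}]$, where $s$ is the position of $v$ in $L_m$) and the target is strictly above $v$, i.e.\ $i\ge j-m+1$. Thus the lemma reduces to showing that at most one pair $(i,t)$ with $i\ge j-m+1$ satisfies $b_{i,t}\in W_s$.

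The core step is an arithmetic lower bound on $|b_{i_2,t_2}-b_{i_1,t_1}|$ for distinct pairs with $j-m+1\le i_1\le i_2$. The case $i_1=i_2=i$ is immediate: $|b_{i,t_2}-b_{i,t_1}|=|t_2-t_1|\,k^i\ge k^i\ge k^{j-m+1}$, already exceeding $|W_s|=k^{j-m}$, so the two values cannot both lie in $W_s$. For $i_1<i_2$, the definition of $a_i$ yields the key identity
\[
a_{i_2}-a_{i_1}=\frac{k^{i_1-1}(k^{i_2-i_1}-1)}{k-1}=k^{i_1-1}\bigl(1+k+\cdots+k^{i_2-i_1-1}\bigr),
\]
and substituting $b_{i,t}=a_i+(t-1)k^i$ gives
\[
b_{i_2,t_2}-b_{i_1,t_1}=k^{i_1-1}\Bigl[(1+k+\cdots+k^{i_2-i_1-1})+k\bigl((t_2-1)k^{i_2-i_1}-(t_1-1)\bigr)\Bigr].
\]
The bracketed integer is congruent to $1$ modulo $k$ and is therefore nonzero, so $|b_{i_2,t_2}-b_{i_1,t_1}|\ge k^{i_1-1}\ge k^{j-m}$, which again rules out both values lying inside the window $W_s$ of length $k^{j-m}$.

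Combining the two cases, no edge of $T$ carries more than one call, so $FromLevel$ respects the edge-disjoint constraint. The main obstacle is the algebraic identity for $a_{i_2}-a_{i_1}$ together with the modular argument that forces the bracketed factor above to be $\equiv 1\pmod k$; with that in place the lemma reduces to a simple order-of-magnitude comparison between the separation of the $b$-values and the width of $W_s$.
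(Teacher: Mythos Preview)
Your proof is correct and in fact considerably more careful than the paper's. The two arguments are organized differently. The paper proceeds \emph{call-pair by call-pair}: it distinguishes the case of two calls with the same index $i$ (same target level $L_{j-i}$, different $t$) from the case of two calls with different indices, and in each case asserts that the relevant ancestors are distinct, hence the upward paths do not share an edge. No arithmetic is written out; the claim about distinct ancestors is stated but not derived from the formula for $b_{i,t}$.

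Your argument is \emph{edge-centric}: you fix an edge $(v,P(v))$ with $v\in L_m$, translate ``the call $(i,t)$ uses this edge'' into the two conditions $i\ge j-m+1$ and $b_{i,t}\in W_s$, and then prove arithmetically that no two admissible $b$-values can fall in a window of width $k^{j-m}$. The key identity $a_{i_2}-a_{i_1}=k^{i_1-1}(1+k+\cdots+k^{i_2-i_1-1})$ together with the observation that the bracketed factor is $\equiv 1\pmod k$ (hence nonzero) is exactly what is needed to handle the mixed case $i_1<i_2$, and this is the step the paper glosses over. Your approach buys a genuinely self-contained proof; the paper's approach is shorter on the page but leaves the reader to supply precisely the computation you carried out.
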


\begin{proof}
If $p \neq q$, then $a_{p} \neq a_{q}$, and for each $i$, $1 \le i \le j-1$, the ancestors of $b_{p,t}$  and $b_{q,t}$ in $L_{j-i}$ are distinct. Therefore, for each $i$, $1 \le i \le j-1$, any two calls to vertices in $L_{j-i}$ are edge disjoint.

For each $i$, $1 \le i \le j-1$, and for each $m,s$, $1 \le m,s \le k^{j}$, $m \neq s$, $b_{i,m} \neq b_{i,s}$, the ancestors of $b_{m,t}$ and $b_{s,t}$ is not in the same level and therefore, the paths from $b_{m,t}$ and $b_{s,t}$ to their ancestors are edge-disjoint.
\end{proof}

\subsubsection{The cost of the procedure $FromLevel$}
For each $i$, $1 \le i \le j$, the cost of each call is $i$, since the call is from a vertex in $L_{j}$ to a vertex in $L_{j-i}$.
Since $|L_{j-i}|=k^{j-i}$, it follows that the cost of all calls to $L_{j-i}$ is $ik^{j-i}$, and the cost of all calls to all $L_{1},...,L_{j}$ is:
 $\sum _{i=1}^{j} ik^{j-i}$.
Note that if $u=root(T)$, we have to subtract $j$, i.e., the cost of a call from a vertex in $L_{j}$ to $root(T)$, and thus the cost of the procedure is at most $$\sum _{i=1}^{j} ik^{j-i}-j = \frac{k^{j+1}-kj-k+j}{(k-1)^2}-j. \eqno(5)$$


\section{Proof of Theorem 2}

In this section, we present our main algorithm, LBCKT, for carrying out broadcasting in a complete $k$-tree. LBCKT uses three line-broadcasting algorithms denoted Alg1, Alg2 and Alg3, which are also presented in this section.


The input to all four algorithms is the tree $T=(V,E)$, where $|V|=n=\frac{k^{r+1}-1}{k-1}$; the tree parameters $k$ and $r$, and the originator $u \in V$.



LBCKT decides which of the three algorithms to use - Alg1, Alg2 or Alg3 on the basis of the values of $k$ and $r$. Specifically,LBCKT selects the algorithms that is expected to result in the broadcast scheme with the lowest cost.
We denote the cumulative cost of $Algi$, $1 \le i \le 3$, as $B_{Algi}(n)$, $1 \le i \le 3$.
We shall show that $B_{Alg1}(n) < B_{Alg2}(n) < B_{Alg3}(n)$, and that LBCKT cannot use Alg1 and Alg2 for all $k$ and $r$, because for some values of $k$ and $r$, Alg1 and Alg2 do not fulfill the $\lceil \log_{2} n \rceil$ time constraint. Therefore, LBCKT uses these algorithms only for values of $k$ and $r$ for which the time constraint is fulfilled.

\begin{algorithm}[h!!]
\caption{LBCKT(T,k,r,u)}
\label{alg:EDP2}
\begin{algorithmic}[1]
	\STATE 	{if $ r\lceil \log_{2}(k+1) \rceil \le \lceil \log_{2}n \rceil$ then}
\item \textit{\tab{$Alg1(T,k,r,u)$}}
	\STATE 	{else if $\lceil \log_{2}(n-k^{r})\rceil + \lceil \log_{2}(k+1)\rceil \le \lceil \log_{2}n\rceil$ then}
	\STATE {\tab{\tab{$Alg2(T,k,r,u)$}}}
	\STATE {\tab{else $Alg3(T,k,r,u)$}}

	\end{algorithmic}
\end{algorithm}


In the following sections 5.1, 5.2 and 5.3 we present the three algorithms, Alg1, Al2 and Alg3.

\subsection{Alg1}
The algorithm consists of $r$ rounds, where each round takes $\lceil \log_{2}(k+1) \rceil$ time units. At each time unit at round $j$, $1 \le j \le r$, each informed vertex $v \in L_{j-1}$ transmits to one of its uninformed children, and each informed vertex $w \in L_{j}$ transmits to one of its uninformed siblings, if there are any.









\begin{algorithm}[h!!]
\caption{Alg1(T,k,r,u)}
\label{alg:EDP2}
\begin{algorithmic}[1]

\STATE {if $u \in L_{0} \cup L_{1}$ or $k = 2^{p}$}
\STATE{\tab{for $1 \le i \le \lceil \log_{2}(k+1) \rceil$}}
\STATE{\tab{\tab{$u$ transmits to an uninformed $v \in L_{1}$}}}
\STATE{else}
\STATE{\tab{$u$ transmits to $root(T)$}}
\STATE{\tab{for $2 \le i \le \lceil \log_{2}(k+1) \rceil$}}
\STATE{\tab{\tab{$root(T)$ transmits to an uninformed $v \in L_{1}$}}}

\STATE{\tab{each informed $v \in L_{1}$ transmits to an uninformed sibling.}}

\STATE{for $2 \le j \le r$}
\STATE{\tab{for $1 \le i \le  \lceil \log_{2}(k+1) \rceil$}}
\STATE{\tab{\tab{each informed $v \in L_{j-1}$ transmits to an uninformed child.}}}
\STATE{\tab{\tab{each informed $v \in L_{j}$ transmits to an uninformed sibling}}}
\STATE{if $u \in L_{1}$ or $k = 2^{p}$}
\STATE{\tab{at time unit $\lceil \log_{2}n \rceil$ some $v \in L_{1}$ transmits to $root(T)$}}
	\end{algorithmic}
\end{algorithm}		




It is easily observed that each of the algorithm rounds takes
$\lceil \log_{2}(k+1) \rceil$ time units, and therefore the line-broadcasting in $T$ is completed within $r \lceil \log_{2}(k+1) \rceil$ time units.


\begin{remark}
Since $n=\frac{k^{r+1}-1}{k-1}$, it follows that for $k = 2^{p}-1$, where $p>2$ is an integer, and $\lceil \log_{2}(k^{r+1}-1)\rceil=\lceil \log_{2}k^{r+1} \rceil$ , Alg1 completes the line-broadcasting within $r \lceil \log_{2}(k+1) \rceil = \lceil \log_{2}n \rceil$ time units.

\end{remark}

\subsubsection{Correctness of Alg1}
\begin{lemma}
\label{l9}
At the end of the execution of the algorithm, all vertices in $T$ are informed.
\end{lemma}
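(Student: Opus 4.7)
The plan is to prove the invariant by induction on the round index $j\in\{1,\ldots,r\}$: at the end of round $j$ of Alg1, every vertex in $L_1\cup\cdots\cup L_j$ is informed, while $root(T)$ is either informed from the outset or is informed by the final explicit line of the algorithm. The central structural observation is that for $j\ge 2$ the round decomposes into $k^{j-1}$ independent local broadcasts, one per ``star'' consisting of a parent $v\in L_{j-1}$ together with its $k$ children in $L_j$, so that correctness of a whole round reduces to correctness of broadcasting in a single such star.

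For the base case $j=1$, I would dispatch on the three branches of lines 1--8 of Alg1. When $u=root(T)\in L_0$, the standard doubling $u\to L_1$ informs $L_1$ in $\lceil\log_2(k+1)\rceil$ time units; when $u\in L_1$ the same doubling applies with $u$ acting as an already-informed child of $root(T)$, and $root(T)$ is informed by the final explicit call of line 14; when $u\notin L_0\cup L_1$ and $k\neq 2^p$, the initial call $u\to root(T)$ consumes one time unit and the remaining $\lceil\log_2(k+1)\rceil-1$ time units suffice to inform $L_1$; when $k=2^p$ the extra slack in $\lceil\log_2(k+1)\rceil=p+1$ absorbs the final call to $root(T)$. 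For the inductive step $j\ge 2$, I would argue that within each star the count of informed children satisfies the recurrence ``doubles, capped at $k$'', so after $\lceil\log_2(k+1)\rceil$ time units it reaches $k$ and $L_j$ is complete; this is exactly Lemma~\ref{l5} applied locally to a height-$1$ tree, specialised so that the parent is the only initially informed vertex.

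The hard part will be verifying edge-disjointness of the calls executed simultaneously inside a single star, since every sibling-to-sibling path passes through the shared parent. At time unit $i$ of round $j$ the parent $v$ is incident to $2^i-1$ simultaneously used edges, namely $2^{i-1}$ edges $(v,c_{\text{new}})$ carrying incoming messages to newly informed children and $2^{i-1}-1$ edges $(c_{\text{inf}},v)$ carrying outgoing messages from previously informed children; as $2^i-1\le k$ during the first $\lceil\log_2(k+1)\rceil$ time units of the round, these edges are all distinct. Distinct stars in round $j$ lie in edge-disjoint subtrees, so there is no cross-star interference, and the final explicit call in line 14 handles $root(T)$ exactly when it has not already been informed. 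Combining the base case, the inductive step and this disjointness check will yield the lemma.
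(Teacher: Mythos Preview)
Your approach is essentially the same as the paper's: an induction on the round index $j$, with a case analysis on the position of $u$ (and whether $k=2^p$) for round~1 and the root, followed by the observation that for $j\ge 2$ the round decomposes into $k^{j-1}$ independent parent-to-children star broadcasts, each completing in $\lceil\log_2(k+1)\rceil$ time units by doubling. The paper's write-up is terser but structurally identical.

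Two minor remarks. First, the edge-disjointness verification in your final paragraph is not part of this lemma: the paper isolates that property as the separate Lemma~\ref{l10}, and Lemma~\ref{l9} only asserts that every vertex ends up informed. Second, your appeal to Lemma~\ref{l5} is misplaced---that lemma concerns the procedure \textit{ToLevel}, not the local star broadcast of Alg1. The direct doubling argument you sketch (informed count satisfies $c_{i+1}=\min(2c_i+1,k)$ within each star) is the correct justification and stands on its own.
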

\begin{proof}
First, consider $root(T)$. If $root(T)=u$, $root(T)$ is informed at the beginning of the broadcast. Otherwise, if $u \in L_{1}$ or $k = 2^{p}$, $root(T)$ is informed at the last time unit (lines 13,14 in Alg1). Otherwise, $root(T)$ is informed by $u$ at time unit 1 (line 5). Thus, at the end of the algorithm $root(T)$ is informed.

The vertices in $L_{1}$ are informed either by $u$ (lines 2-3 or 6-7), or by an informed sibling (line 8).

Consider now the vertices in $L_{2},...,L_{r}$. Each round $j$, $2 \le j \le r$, takes $\lceil \log_{2}(k+1) \rceil$ time units, which is the time needed to complete the broadcast from the parent in $L_{j-1}$ to its $k$ children in $L_{j}$. Therefore, each vertex in $L_{j}$ is informed either by its parent or by an informed sibling in $L_{j}$ (lines 9-12). Thus, after round $j$, $2 \le j \le r$, all vertices in $L_{j}$ are informed.

Thus, at the end of the algorithm all vertices in $L_{0},...,L_{r}$ are informed.

\end{proof}

\begin{lemma}
\label{l10}
The algorithm Alg1 fulfills the edge disjoint constraint.
\end{lemma}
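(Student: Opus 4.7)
The plan is to verify the edge-disjoint constraint time-unit by time-unit, exploiting the layered structure of Alg1. The key observation is that at round $j$, $2\le j\le r$, every call uses only edges between $L_{j-1}$ and $L_j$: a parent-to-child call from $v\in L_{j-1}$ to a child $c$ uses the single edge $(v,c)$, while a sibling-to-sibling call from $c_a\in L_j$ to $c_b\in L_j$ uses the length-$2$ path $c_a\to P(c_a)\to c_b$ (where $c_a,c_b$ are both children of the same $v\in L_{j-1}$), so it uses the two edges $(c_a,v)$ and $(v,c_b)$.

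Accordingly, the calls at a given time unit of round $j$ decompose into independent ``stars'' centered at the vertices of $L_{j-1}$. Calls associated with distinct $v,v'\in L_{j-1}$ lie in disjoint edge sets, because in a tree the edges incident to $v$ and the edges incident to $v'$ are disjoint. Within one star centered at $v$, I would argue that at each time unit (i)~each informed vertex in $\{v\}\cup\mathrm{children}(v)$ initiates at most one call, and (ii)~each uninformed child of $v$ receives at most one call (the latter being a standard broadcasting invariant ensured by the algorithm's choice of destinations). Combining (i) and (ii) shows that each edge $(v,c)$ of the star is used at most once at that time unit: as a sender-side edge (i.e., $c$ transmitting to a sibling through $v$) it is bounded by (i) applied to the informed $c$, and as a receiver-side edge (i.e., $c$ receiving from $v$ or from a sibling) by (ii) applied to the uninformed $c$; these two roles are mutually exclusive since $c$ is either informed or uninformed.

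For the boundary cases one applies the same star argument with $v=root(T)$. When $u\in L_0\cup L_1$ or $k=2^p$, all of the first $\lceil\log_2(k+1)\rceil$ time units consist of calls inside the star at $root(T)$ (direct transmissions from $u$ to $L_1$, together with sibling-to-sibling calls in $L_1$ once those vertices are informed). In the remaining branch, the single call $u\to root(T)$ at time $1$ is the only call active then, and the later calls of that round again form a star at $root(T)$. Finally, the closing call $v\to root(T)$ from some $v\in L_1$ at time $\lceil\log_2 n\rceil$ uses only the edge $(v,root(T))$, while all other calls at that time unit occur in stars centered at vertices of $L_{r-1}$, hence are disjoint from it.

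I expect the main obstacle to be the bookkeeping around these boundary cases, in particular ensuring that the auxiliary calls involving $root(T)$---the detour $u\to root(T)$ in the ``else'' branch, and the finishing call to $root(T)$ when $u\in L_1$ or $k=2^p$---fit into the star picture without colliding with the simultaneous deep-tree activity. Once this is checked, the interior rounds $2\le j\le r$ are handled uniformly by the star argument above.
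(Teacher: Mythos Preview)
Your argument is correct and rests on the same observation the paper uses: at round $j$ every call lives on the edges between $L_{j-1}$ and $L_j$, so the question reduces to ruling out conflicts inside each parent-centered star. The paper phrases this as a three-case pairwise analysis (parent--child vs.\ parent--child, parent--child vs.\ sibling--sibling, sibling--sibling vs.\ sibling--sibling), whereas you package the same reasoning via the informed/uninformed dichotomy on each edge $(v,c)$; both arrive at the same conclusion, and both implicitly rely on the broadcasting convention that no two callers target the same uninformed vertex. Your treatment of the boundary calls involving $root(T)$ (the detour $u\to root(T)$ in the else branch, and the final call from $L_1$ to $root(T)$) is in fact more explicit than the paper's, which restricts attention to the generic round-$j$ calls and does not separately discuss these exceptional transmissions.
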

\begin{proof}
At each round $j$, $1 \le j \le r$, a call may be either from a vertex in $L_{j-1}$ to one of its children or between two siblings in $L_{j}$. Consider two calls, $c_{1}$, $c_{2}$, that are executed at the same time unit, at round $j$. There are three possible cases:
\begin{enumerate}
\item $c_{1}$ is between a vertex $v \in L_{j-1}$ and a vertex $y \in L_{j}$, and $c_{2}$ is between a vertex $z \in L_{j-1}$ and a vertex $w \in L_{j}$. Thus, $c_{1}$ uses only the edge $(v,y)$, and $c_{2}$ uses only the edge $(z,w)$. Since $v \neq z$ and $T$ is a tree, then $y \neq w$, and thus $c_{1}$ and $c_{2}$ are edge-disjoint.

\item $c_{1}$ is from a vertex $v \in L_{j-1}$ to a vertex $y \in L_{j}$, and $c_{2}$ is from a vertex $z \in L_{j}$ to its sibling $w$. Thus, $c_{1}$ uses only the edge $(v,y)$, and $c_{2}$ uses two edges $(z,P(z))$ and $(P(z),w)$. Since $y \neq z$ and $y \neq w$, $c_{1}$ and $c_{2}$ are edge-disjoint.

\item $c_{1}$ is from a vertex $v \in L_{j}$ to its sibling $y$, and $c_{2}$ is from a vertex $z \in L_{j}$ to its sibling $w$.
    Thus, $c_{1}$ uses the edges $(v,P(v))$ and $(P(v),y)$, and $c_{2}$ uses two edges $(z,P(z))$ and $(P(z),w)$. Since $v \neq z$, $y \neq z$ and $y \neq w$, $c_{1}$ and $c_{2}$ are edge-disjoint.

\end{enumerate}
In addition, all calls from a vertex $v \in L_{j-1}$ to its child $x \in L_{j}$ use the edge $(v,x)$, which is the only edge on the path between them. All calls from a vertex $w \in L_{j}$, whose parent is $v$, to its sibling $z$, $z \neq x$, use the edges $(w,v)$ and $(v,z)$, and no other calls use these edges.
\end{proof}

\subsubsection{The cost of Alg1}
 We calculate the cost of each round.
 For each $j$, $1 \le j \le r$:
 \begin{enumerate}
 \item Each informed vertex $v \in L_{j-1}$ broadcasts to $\lceil \log_{2}(k+1) \rceil$ of its children, where the cost of each call is exactly 1. Thus, the cost of the calls initiated by $v$ is $\lceil \log_{2}(k+1) \rceil$. Since $|L_{j-1}| = k^{j-1}$, the cost of all calls from all vertices in $L_{j-1}$ to their children is $$k^{j-1} \lceil \log_{2}(k+1) \rceil.$$

 \item For each vertex $v \in L_{j-1}$, the number of children that transmit to their siblings is $k-\lceil \log_{2}(k+1) \rceil$, and the cost of each call is exactly 2. Since $|L_{j-1}| = k^{j-1}$, then the cost of all such calls is, $2k^{j-1}(k-\lceil \log_{2}(k+1) \rceil)$ at each round.
 \end{enumerate}
Thus, for each $j$, $1 \le j \le r$, the cost of round $j$ is $$k^{j-1} \lceil \log_{2}(k+1) \rceil + k^{j-1}(2(k-\lceil \log_{2}(k+1) \rceil)) = k^{j-1}(2k-\lceil \log_{2}(k+1) \rceil),$$

and the cost of all $r$ rounds is at most,
$$(2k-\lceil \log_{2}(k+1) \rceil)\sum_{j=1}^{r} k^{j-1} =(2-\frac{\lceil \log_{2}(k+1) \rceil}{k})n +\frac{\lceil \log_{2}(k+1) \rceil}{k}-2.$$

This proves (1) of theorem 2.

\subsection{Alg2}
This algorithm is based upon the procedures $ToLevel$ and $FromLevel$ and performs additional calls to the tree leaves.
The algorithm consist of two rounds.
At the first round, the originator, $u$, broadcasts to all the vertices in $L_{r-1}$, and in the second round, $u$ and the vertices in $L_{r-1}$ broadcast to all the remeining tree vertices of $T$, namely, to the vertices in $L_{i}$, $1 \le i \le r-2$, and to $L_{r}$, the set of $T$ leaves.

\begin{algorithm}[h!!]
\caption{Alg2(T,k,r,u)}
\label{alg:EDP2}
\begin{algorithmic}[1]
	\STATE 	{ToLevel   $(T,k,r,r-1,u)$}
	\STATE 	{FromLevel $(T,k,r,r-1,u)$ and at the same time broadcast to the tree leaves (on each star rooted in a vertex in $L_{r-1}$)}
	\end{algorithmic}
\end{algorithm}


\subsubsection{Correctness of Alg2}
\begin{lemma}
\label{l11}
At the end of the execution of the algorithm all vertices in $T$ are informed.
\end{lemma}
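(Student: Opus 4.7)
The plan is to reduce correctness of Alg2 to the two subroutines whose behaviour has already been established. After line 1 of Alg2, Lemma \ref{l5} applied to $ToLevel(T,k,r,r-1,u)$ guarantees that exactly $\{u\} \cup L_{r-1}$ is informed; every vertex of $L_0 \cup \ldots \cup L_{r-2} \cup L_r$, other than possibly $u$ itself, is still uninformed entering line 2.

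I would then decompose line 2 into two parallel activities. The first, $FromLevel(T,k,r,r-1,u)$, informs every vertex of $L_0 \cup \ldots \cup L_{r-2}$ by Lemma \ref{l7}. The second is a line-broadcast, rooted at each $v \in L_{r-1}$, on the star formed by $v$ and its $k$ leaf children. A standard doubling line-broadcast on a $(k+1)$-vertex star from its centre finishes in $\lceil \log_{2}(k+1) \rceil$ time units and informs all $k$ children of $v$. Since every leaf of $T$ is the child of some $v \in L_{r-1}$, this informs all of $L_r$. Combining these two activities with the fact that $L_{r-1}$ was already informed at the start of line 2, every vertex of $T$ is informed at termination, which is exactly the claim of the lemma.

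The main obstacle I expect is verifying that the two parallel activities of line 2 are actually compatible: they must respect both the edge-disjoint and the single-port constraints. Edge-disjointness is immediate because $FromLevel$'s calls traverse only edges with both endpoints in $L_0 \cup \ldots \cup L_{r-1}$, while the star-broadcasts use only edges between $L_{r-1}$ and $L_r$, and these two edge sets are disjoint. For the single-port constraint, the subtlety is that a $b_{i,t}$ vertex of Def 4 is occupied by $FromLevel$ for exactly one time unit and cannot simultaneously initiate its star-broadcast; I would handle this by deferring the first transmission of each such star by one time unit, and argue that the overall schedule still terminates within $\lceil \log_{2} n \rceil$ time units using the guarding hypothesis $\lceil \log_{2}(n-k^{r}) \rceil + \lceil \log_{2}(k+1) \rceil \le \lceil \log_{2}n \rceil$ under which LBCKT selects Alg2. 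The doubling inside each star, together with the spare time unit absorbed by this hypothesis, is what makes the shift harmless and keeps the informed-vertex set growing exactly as prescribed by Lemmas \ref{l5} and \ref{l7}.
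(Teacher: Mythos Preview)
Your core argument is correct and mirrors the paper's proof exactly: invoke Lemma~\ref{l5} for $ToLevel(T,k,r,r-1,u)$ to get $\{u\}\cup L_{r-1}$ informed, invoke Lemma~\ref{l7} for $FromLevel(T,k,r,r-1,u)$ to get $L_0,\ldots,L_{r-2}$ informed, and observe that the star broadcasts from each $v\in L_{r-1}$ cover $L_r$.

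Where you diverge is in your third paragraph. Lemma~\ref{l11} is purely a coverage statement --- ``all vertices are informed'' --- and the paper proves nothing more than that here. The edge-disjointness you worry about is deferred to the separate Lemma~\ref{l12}, and the total-time bound is handled in its own subsection. Your single-port concern (that a vertex in $L_{r-1}$ serving as some $b_{i,t}$ for $FromLevel$ cannot simultaneously launch its star broadcast) is a legitimate observation that the paper does not spell out; however, it is a scheduling/feasibility issue rather than a coverage issue, so it does not belong in the proof of this lemma. If you want to raise it, the natural place would be alongside the timing analysis of Alg2, not here.
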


\begin{proof}
From lemma \ref{l5}, it follows that after the execution of $ToLevel(T,k,r,r-1,u)$, $u$ and all vertices in $L_{r-1}$ are informed. From lemma \ref{l7} it follows that after the execution of $FromLevel(T,k,r,r-1,u)$, all vertices in levels $L_{0},...,L_{r-1}$ are informed. And since each vertex in $L_{r-1}$ broadcasts to the tree leaves, after round 2 all vertices in $L_{r}$ are informed. Thus, at the end of algorithm Alg2, all the vertices of $T$ are informed.
\end{proof}

\begin{lemma}
\label{l12}
The algorithm Alg2 fulfills the edge disjoint constraint.
\end{lemma}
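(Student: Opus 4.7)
The plan is to decompose Alg2 into its two rounds and to verify edge-disjointness within each round separately, invoking the edge-disjointness lemmas already proved for the two sub-procedures. Round~1 is nothing but the invocation $ToLevel(T,k,r,r-1,u)$, so Lemma~\ref{l6} immediately certifies that all calls executed during round~1 satisfy the constraint; since round~1 is completed before round~2 begins, no cross-round interference is possible.

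Round~2 is the delicate part, because two distinct families of calls run concurrently: the FromLevel sub-procedure and, in each star rooted at a vertex of $L_{r-1}$, the broadcast to the $k$ leaf-children in $L_r$. I would argue in three sub-claims. First, all FromLevel calls are mutually edge-disjoint by Lemma~\ref{l8}. Second, the star-broadcasts rooted at distinct vertices of $L_{r-1}$ use pairwise disjoint edge sets, because distinct $L_{r-1}$-vertices have pairwise disjoint sets of $L_r$-children, and each such star-broadcast uses only the pendant edges of its own star; within a single star, the usual time-unit-by-time-unit scheduling is edge-disjoint by an argument entirely analogous to the one used in the proof of Lemma~\ref{l10}. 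Third, no edge is ever shared between a FromLevel call and a star-broadcast call: every edge traversed by a FromLevel call lies on a path from a vertex of $L_{r-1}$ upward to an ancestor in $L_0 \cup \cdots \cup L_{r-2}$, so both of its endpoints lie in $L_0 \cup \cdots \cup L_{r-1}$, whereas every edge traversed by a star-broadcast has one endpoint in $L_r$.

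The main obstacle I anticipate is the second sub-claim: making the per-time-unit scheduling of a single-star broadcast explicit enough to rule out an accidental reuse of the same pendant edge within one time unit, especially when the very first time unit of the star-broadcast must be interleaved with the concurrent FromLevel step. Once this scheduling is pinned down, combining the three sub-claims with the trivial round~1 case completes the proof.
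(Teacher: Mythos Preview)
Your proposal is correct and follows essentially the same approach as the paper: separate round~1 (handled by Lemma~\ref{l6}) from round~2, and within round~2 split into FromLevel--FromLevel pairs (Lemma~\ref{l8}), star--star pairs, and FromLevel--star pairs. The paper carries out the round~2 analysis as an explicit four-case case split on the types of the two calls, whereas you phrase it as three structural sub-claims; in particular your level-based observation for sub-claim~3 (FromLevel edges lie entirely in $L_0\cup\cdots\cup L_{r-1}$, star edges always touch $L_r$) is a slightly cleaner formulation of the paper's case~1, and the obstacle you anticipate about intra-star scheduling is exactly what the paper disposes of in its cases~2--4 by the same elementary edge-comparison argument you allude to via Lemma~\ref{l10}.
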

\begin{proof}
Since the rounds are executed sequentially, calls that are executed in different rounds are edge-disjoint. Two calls that are executed by $FromLevel$ or two calls that are executed by $ToLevel$ have been proven to be edge-disjoint in lemmas \ref{l6} and \ref{l8}, respectively. Thus, it is necessary to prove only that two calls, $c_{1}$, $c_{2}$ that are executed in round 2 at the same time unit, but are not both executed by $FromLevel$, are edge-disjoint. There are 4 possible cases:
\begin{enumerate}

\item $c_{1}$ is executed by $FromLevel$, and $c_{2}$ is between a vertex $v \in L_{r-1}$ and a vertex $x \in L_{r}$. $c_{1}$ uses a path from a vertex in $L_{r-1}$ to a vertex in $L_{i}$, for $0\le i \le r-2$, and $c_{2}$ uses only the edge $(v,x)$. Thus, these calls are edge-disjoint.

\item $c_{1}$ is between a vertex $v \in L_{r-1}$, and a vertex $x \in L_{r}$ and $c_{2}$ is between a vertex $z \in L_{r-1}$, $z \neq v$, and a vertex $w \in L_{r}$ ($w \neq z$ because $T$ is a tree). Thus, $c_{1}$ uses only the edge $(v,x)$, and $c_{2}$ uses only the edge $(z,w)$, so that $c_{1}$ and $c_{2}$ are edge-disjoint.

\item $c_{1}$ is between a vertex $v \in L_{r-1}$ and a vertex $x \in L_{r}$, and $c_{2}$ is between two vertices $z,w \in L_{r}$. Thus, $c_{1}$ uses only the edge $(v,x)$, and $c_{2}$ uses two edges, $(z,P(z))$ and $(P(z),w)$. Since $x \neq z$ and $x \neq w$, $c_{1}$ and $c_{2}$ are edge-disjoint (even if $P(z)=v$).

\item $c_{1}$ is between two vertices $v,x \in L_{r}$, and $c_{2}$ is between two vertices $z,w \in L_{r}$, $z,w \not\in \{v,x\}$.
    Therefore, $c_{1}$ uses the edges $(v,P(v))$ and $(P(v),x)$, and $c_{2}$ uses two edges, $(z,P(z))$ and $(P(z),w)$. Thus, $c_{1}$ and $c_{2}$ are edge-disjoint.

\end{enumerate}
\end{proof}

\subsubsection{The total time of Alg2}
The execution of the procedure $ToLevel$ at the first round takes $ \lceil \log_{2}(k^{r-1}+1) \rceil$ time units.

The execution of the procedure $FromLevel$ at the second round takes one unit of time, and the broadcast from the vertices in $L_{r-1}$ to the vertices in $L_{r}$ takes $\lceil \log_{2}(k+1) \rceil$ time units, which is the total time of the second round.

Thus, the total time needed to complete Alg2 is $ \lceil \log_{2}(k^{r-1}+1) \rceil + \lceil \log_{2}(k+1) \rceil \le \log_{2}((k^{r-1}+1)(k+1)) = \log_{2}(k^{r}+k^{r-1}+k+1) $

\subsubsection{The cost of Alg2}
The cost of the first round is derived from (3), where $r-1$ is substituted for $j$.
The cost of the second round is $k^{r-1}(2k-\lceil \log_{2}(k+1) \rceil)$.

Thus, the cost of Alg2 is at most
$$[2-\frac{(k-1)}{k^2}\lceil \log_{2}(k+1)\rceil + \frac{1}{k(k-1)}]n-2(r-1)+\frac{k}{(k-1)^2}+\frac{1}{k}-\frac{\lceil \log_{2}(k+1) \rceil}{k^2}.\eqno(6)$$

This proves (2) of theorem 2. 

\subsection{Alg3}
This algorithm is based on the procedures $ToLevel$ and $FromLevel$. First, the originator broadcasts to $L_{r}$, namely, the tree leaves, and then the originator and the vertices in $L_{r}$ broadcast to all the tree vertices. The execution of $FromLevel$ starts at the last time unit of the execution of $ToLevel$, where an informed vertex transmits an uninformed sibling.

\begin{algorithm}[h!!]
\caption{Alg3(T,k,r,u)}
\label{alg:EDP2}
\begin{algorithmic}[1]
	\STATE 	{ToLevel   $(T,k,r,r,u)$}
	\STATE 	{FromLevel $(T,k,r,r,u)$}
	\end{algorithmic}
\end{algorithm}


\subsubsection{Correctness of Alg3}
\begin{lemma}
\label{l14}
At the end of the execution of the algorithm all vertices in $T$ are informed.
\end{lemma}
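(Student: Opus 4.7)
The plan is to prove this lemma by direct appeal to the correctness of the two sub-procedures from which Alg3 is assembled. Specifically, I would instantiate Lemma \ref{l5} and Lemma \ref{l7} at the parameter $j=r$ and then observe that together they cover every level of $T$.

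First I would look at line 1 of Alg3, which invokes $ToLevel(T,k,r,r,u)$. By Lemma \ref{l5}, at the conclusion of this procedure every one of the $k^{r}$ vertices of $L_{r}$ is informed; the originator $u$ is of course informed as well, since it begins the broadcast. This establishes precisely the precondition required at the start of $FromLevel$ when called with $j=r$, namely that $u$ together with every vertex of $L_{r}$ already holds the message.

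Next I would apply Lemma \ref{l7} to the call $FromLevel(T,k,r,r,u)$ in line 2. That lemma guarantees that at its termination every vertex of $L_{0} \cup L_{1} \cup \cdots \cup L_{r-1}$ has been informed. Combining this with the fact that $L_{r}$ was already informed at the end of the $ToLevel$ phase, I conclude that $V = L_{0} \cup L_{1} \cup \cdots \cup L_{r}$ is entirely informed, which is exactly the statement of the lemma.

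The one subtle point, and the place where I expect any real work to lie, is the remark preceding Alg3 to the effect that $FromLevel$ actually begins at the last time unit of $ToLevel$, with an informed vertex transmitting to an uninformed sibling during the overlap. This interleaving does not affect the present claim, because informedness is monotone in the set of executed calls: any vertex that would be informed under a purely sequential schedule remains informed under the interleaved one. The overlap will, however, need careful attention in the companion edge-disjointness lemma, and I expect that (not the present statement) to be the only genuine verification required; correctness itself follows almost formally from Lemmas \ref{l5} and \ref{l7}.
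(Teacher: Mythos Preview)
Your proposal is correct and follows essentially the same approach as the paper: both proofs simply instantiate Lemma~\ref{l5} and Lemma~\ref{l7} at $j=r$ and observe that the two invocations together inform $L_{0}\cup\cdots\cup L_{r}=V$. Your additional remark about the overlap being irrelevant for informedness (by monotonicity) is a reasonable clarification that the paper omits, but it does not change the argument.
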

\begin{proof}
 According to lemma \ref{l5}, after the execution of $ToLevel(T,k,r,r,u)$, $u$ and all vertices in $L_{r}$ are informed. According to lemma \ref{l7}, after the execution of $FromLevel(T,k,r,r,u)$, all vertices in levels $L_{0},...,L_{r}$ are informed. Thus, at the end of the algorithm Alg3 all vertices in $T$ are informed.
\end{proof}

\begin{lemma}
\label{l15}
The algorithm Alg3 fulfills the edge disjoint constraint.
\end{lemma}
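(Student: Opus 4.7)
The plan is to reduce edge-disjointness in $Alg3$ to a single time unit at which the two sub-procedures overlap. By Lemmas \ref{l6} and \ref{l8}, any two calls both belonging to $ToLevel$, or both belonging to $FromLevel$, are already edge-disjoint. Since $Alg3$ executes $ToLevel(T,k,r,r,u)$ first and then starts $FromLevel(T,k,r,r,u)$ concurrently with the very last time unit of $ToLevel$ (the one in which some informed leaf still transmits to an uninformed sibling), every other time unit contains calls from only one procedure and is therefore handled by the existing lemmas. Thus I need only show that a $ToLevel$ call $c_1$ and a $FromLevel$ call $c_2$ executed simultaneously at this overlap time unit are edge-disjoint.

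Next I would make the two call types explicit. A $ToLevel$ call at the last time unit of round $m=r$ goes from an informed leaf $v \in L_r$ to an uninformed sibling $w \in L_r$ via their common parent $P(v)=P(w)\in L_{r-1}$, using only the edges $(v,P(v))$ and $(P(v),w)$; thus its entire support lies inside a single depth-one subtree rooted at a vertex of $L_{r-1}$. A $FromLevel$ call leaves some leaf $v_{b_{i,t}}\in L_r$ and climbs to its ancestor in $L_{r-i}$ for some $1 \le i \le r-1$, using exactly the $i$ edges on that upward path. The argument then splits on whether $c_1$ and $c_2$ share the $L_{r-1}$-vertex under which $c_1$'s sibling group sits. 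If they do not, the two bottom edges of $c_1$ lie inside a different depth-one subtree than the bottom edge of $c_2$, and the higher edges of $c_2$ lie strictly above $L_{r-1}$, where $c_1$ uses no edges; so no edge can be shared. If they do, one has to verify that the leaf $v_{b_{i,t}}$ is distinct from both $v$ and $w$ and that the single bottom edge of $c_2$ is distinct from the two bottom edges of $c_1$.

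The hard part will be this second sub-case. Using Def 4, the indices $a_i = 1+\frac{k^{i-1}-1}{k-1}$ single out a leftmost-in-subtree leaf, while the set $S_{r-1}$ driving the last round of $ToLevel$ also consists of leftmost leaves of the $L_{r-1}$-subtrees (namely $v^r_1, v^r_{1+k}, v^r_{1+2k}, \ldots$). Consequently I would have to carefully track, across the $\lceil \log_2 k\rceil$ time units of round $m=r$, exactly which sibling pairs $(v,w)$ are still being matched at the final unit, and verify through the explicit indexing that for every such pair and every $FromLevel$-originator $v_{b_{i,t}}$ incident to the same $L_{r-1}$-vertex, the edges used remain pairwise distinct. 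This bookkeeping, rather than any clever new idea, is where the real work lies; once it is done, the two general-position sub-cases settle the rest immediately.
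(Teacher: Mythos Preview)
Your proposal is substantially more elaborate than the paper's own proof. The paper's argument for this lemma is a single sentence: it treats $ToLevel(T,k,r,r,u)$ and $FromLevel(T,k,r,r,u)$ as executed \emph{sequentially} and simply invokes Lemmas~\ref{l6} and~\ref{l8}. No overlap time unit is analyzed.

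You, by contrast, take seriously the sentence in the description of Alg3 that says ``the execution of $FromLevel$ starts at the last time unit of the execution of $ToLevel$'' and try to handle that one overlapping time unit explicitly. This is a genuinely different route: the paper's proof ignores (or silently resolves) that apparent overlap, whereas you attempt to verify edge-disjointness across the two procedures at that instant. Your structural observations are sound --- the last-round $ToLevel$ calls stay inside a single depth-one star below $L_{r-1}$, while $FromLevel$ calls only use one edge below $L_{r-1}$ and then climb --- so the only potential conflict is indeed the ``same $L_{r-1}$-parent'' sub-case you isolate.

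However, that sub-case is not just bookkeeping, and you have not closed it. At the overlap instant $FromLevel$ presupposes that \emph{all} of $L_r$ is informed, while by hypothesis the $ToLevel$ sibling call is delivering to a still-\emph{uninformed} leaf $w$; so $w$ certainly cannot be a $FromLevel$ originator, but you must still rule out that the $FromLevel$ originator $v_{b_{i,t}}$ coincides with the $ToLevel$ \emph{sender} $v$ (a single vertex cannot place two calls in one time unit), and that its bottom edge $(v_{b_{i,t}},P(v_{b_{i,t}}))$ is not one of $(v,P(v))$ or $(P(v),w)$. Establishing this for every pair $(v,w)$ still matched at the final unit requires pinning down exactly which leaves remain active, and the index-matching between $S_{r-1}$ (leftmost leaves of each $L_{r-1}$-subtree) and the $a_i$ of Def~4 does not by itself separate them. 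Until that verification is carried out your argument has a gap precisely where you flag the ``real work''. The paper sidesteps all of this by declaring the two procedures sequential.
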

\begin{proof}
Since the procedures $ToLevel$ and $FromLevel$ are executed sequentially, the proof follows directly from lemmas \ref{l6} and \ref{l8}.
\end{proof}

\subsubsection{The total time of Alg3}
Since each vertex is active at all time units left from the time unit he receives the message, the number of informed vertices doubles at each time unit and therefore, the total time of Alg3 is $\lceil \log_{2}n \rceil$.



\subsubsection{The cost of Alg3}




The cost of Alg3 is based upon the cost of the procedures $ToLevel$ and $FromLevel$, where $j=r$.

By substituting $r$ for $j$ in (4) and (5), summing (4) and (5), and using the fact that $n=\frac{k^{r+1}-1}{k-1}$ (and therefore $k^{r} = \frac{n(k-1)+1}{k}$), we obtain:
$$B(u) \le (2 + { \frac{1}{k-1}})n + 2r \lceil \log_2 k^{r} \rceil - 2 \lceil \log_2 (k^{r}+1) \rceil  - 3r- { \frac{r+1}{k-1}}. $$


This proves (3) of theorem 2.

\end{document}